\newtheorem{thm}{Theorem}
\newtheorem{lemm}[thm]{Lemma}
\journal{Nonlinear Analysis: Real World Applications}
\begin{document}

\begin{frontmatter}

\title{Comments on: ``Energetic balance for the Rayleigh--Stokes problem of an Oldroyd-B fluid'' [Nonlinear Anal. RWA 12 (2011) 1]}

\author{Ivan C. Christov}
\ead{christov@alum.mit.edu}
\ead[url]{http://alum.mit.edu/www/christov}
\address{Department of Engineering Sciences and Applied Mathematics, Northwestern University, Evanston, IL 60208-3125, USA}

\begin{abstract}
We point out that an erroneous derivation in the recent paper [Fetecau et al., Nonlinear Anal. RWA 12 (2011) 1] yields a correct solution by accident. Additionally, a number of misrepresentations and inaccuracies in the latter recent paper are identified, corrected and/or clarified in this Comment. Finally, a listing of recent papers in this journal that make a mistake applying the Fourier sine transform, and thus present erroneous solutions, is given as an Appendix.
\end{abstract}

\begin{keyword}
Stokes' first problem \sep Oldroyd-B fluid \sep Fourier--Laplace transform \sep Energy analysis \sep Incorrect solution
\end{keyword}

\end{frontmatter}


\section{Introduction}\label{sec:intro}
The recent paper \cite{Fetecau} considers the following partial differential equation:
\begin{equation}
\frac{\partial u}{\partial t} + \lambda \frac{\partial^2 u}{\partial t^2} = \nu\frac{\partial^2 u}{\partial y^2} + \nu\lambda_r\frac{\partial^3 u}{\partial y\partial t\partial y},\qquad y,t>0,
\label{eq:pde}
\end{equation}
which arises in the mathematical description of unidirectional planar motion of an Oldroyd-B (non-Newtonian) fluid resting upon an impermeable infinite plate situated at $y=0$ (see, e.g., \cite{Christov} and the references therein for a complete derivation), with $u=u(y,t)$ being the fluid's velocity component tangential to the plate. Here, $\nu$, $\lambda$ and $\lambda_r$ are constants. Furthermore, Eq.~\eqref{eq:pde} is supplemented with the following initial and boundary conditions \cite[Eqs.~(8)--(9)]{Fetecau}:
\begin{equation}
u(y,0) = \frac{\partial u}{\partial t}(y,0) = 0,\qquad y>0;\qquad u(0,t) = U(t),\quad u \to 0 \;\;\text{as}\;\; y\to\infty,\qquad t>0.
\label{eq:ibc}
\end{equation}
Here, we have omitted the condition $\partial u/\partial y\to0$ as $y\to\infty$ from \cite[Eq.~(9)]{Fetecau} because it is extraneous \cite{Christov}.

The plate's velocity $U(t)$ \emph{might appear} to be an arbitrary function at this point, but \emph{it is not} \cite{ChristovChristov}. This is because, in following \cite[Eq.~(8)]{Fetecau}, we have made a poor choice of notation in expressing the initial-boundary conditions. In fact, this has led to a significant number of errors when solving Eqs.~\eqref{eq:pde}--\eqref{eq:ibc} by integral transform methods \cite{Christov,ChristovChristov,ChristovJordan}. Specifically, for Stokes' first problem, which is called the ``Rayleigh--Stokes problem'' in \cite{Fetecau}, the initial and boundary conditions are always ``incompatible'' at $(y,t) = (0,0)$ because of the requirement that the plate's motion be \emph{sudden} (see the discussion in \cite{Christov}). In other words, to ensure that the plate's velocity is identically zero prior to start-up but arbitrary (not identically zero) following start-up, as stipulated in Eq.~\eqref{eq:ibc}, one must write $U(t) := \widetilde U(t) H(t)$, where $\widetilde U(t)$ is a smooth function that we are free to specify, and $H(t)$ is the Heaviside unit step function.\footnote{It is noteworthy that, in Tanner's 1962 paper \cite{Tanner} on Stokes' first problem for the Oldroyd-B fluid, the initial-boundary conditions are correctly formulated in the manner discussed here (see \cite[Eq.~(17)]{Tanner}). This correct classical work is not credited in \cite{Fetecau}.} This is \emph{not} a ``new'' version of boundary condition, it is simply {the} mathematically-precise statement of the very same initial-boundary conditions for a suddenly moved plate stipulated in Eq.~\eqref{eq:ibc}. Thus, for example, for Stokes' first problem, one takes $\widetilde U(t) = const.$, while for Stokes' second problem one may take $\widetilde U(t) = \cos t$. However, since $H(t)$ is {discontinuous} at $t=0$, i.e., $\lim_{t\to 0^+}H(t) \ne \lim_{t\to0^-} H(t)$, it follows that  $U(t)$ is {discontinuous at $t=0$ as well}. Thus, great care must be taken in properly evaluating derivatives and integrals of the function $U(t)$ as written in Eq.~\eqref{eq:ibc}.

For what follows, it is important to realize that $\widetilde U(t)$ is a well-behaved function in physically-relevant problems. For example, we can take $\widetilde U(t)$ to be continuously differentiable for all $t\in\mathbb{R}$ without loss of generality. Then, $\widetilde U(0)$ is a well-defined quantity \emph{unlike} $U(0)$. Therefore, the claim that ``for a greater generality we consider the boundary condition $u(0,t) = U(t)$ with $U(0) = 0$'' \cite[p.~3]{Fetecau} is without merit when $\widetilde U(t) = const.$ because $U(t)$ is discontinuous at $t=0$. It appears that left-continuity of $u(\cdot,t)$ at the boundary is being implicitly enforced, without justification, in \cite{Fetecau}. To the contrary, in the theory of non-smooth ordinary differential equations (notice the lack of smoothness on the right-hand side of Eq.~\eqref{eq:new_ode} below), it is necessary to enforce \emph{right}-continuity at $t=0$ (see, e.g., \cite[p.~102]{AB08} or \cite[Chap.~15]{D74}), whence $U(0) \ne 0$. 

\section{How to correctly find the solution in the Fourier--Laplace transform domain}
Following \cite{Fetecau}, the next step in the analysis is applying the Fourier sine transform (in $y$ with $\xi$ as the ``dummy'' variable) to Eq.~\eqref{eq:pde}. Thus, it is claimed that the following ordinary differential equation (ODE) is obtained:
\begin{equation}
\lambda \frac{\partial^2 u_\mathrm{s}}{\partial t^2} + (1 + \alpha\xi^2) \frac{\partial u_\mathrm{s}}{\partial t} + \nu\xi^2 u_\mathrm{s} = \xi \sqrt{\frac{2}{\pi}}\left[\nu U(t) + \alpha U'(t) \right], \qquad t>0,
\label{eq:ode}
\end{equation}
where $u_\mathrm{s}(\xi,t)$ denotes the image of $u(y,t)$ in the Fourier domain and $\alpha := \nu\lambda_r$. Equation~\eqref{eq:ode} is subject to the transformed initial conditions $u_\mathrm{s}(\xi,0) = \frac{\partial u_\mathrm{s}}{\partial t}(\xi,0) = 0$. 

In order to be able to solve such an ODE, one {must} know \emph{a priori} the degree of smoothness of the right-hand side. Does the classical solution of the ODE exist or should a solution be sought in the sense of distributions \cite[Chap.~2]{Kanwal}? This is not made clear in Eq.~\eqref{eq:ode} above or in \cite[Eq.~(10)]{Fetecau}. Instead, a cavalier application of the Laplace transform $\mathcal{L}$ (in $t$ with $q$ as the ``dummy'' variable and an over-line denoting the image of a function in the Laplace domain) is made in \cite{Fetecau}. Additionally, the authors implicitly use the relation $\mathcal{L}\{U'(t)\}(q) = q\overline{U}(q) - U(0)$ (incorrectly assuming, once again, that $U(0)=0$), which is only true for \emph{continuously differentiable functions} \cite[\S8.1, Property 3]{Kanwal}. Unfortunately, $U(t)\equiv\widetilde U(t)H(t)$ fails to be continuous at $t=0$, in general, as we explained above. Hence, the latter Laplace transform identity does not apply; moreover, the point value $U(0)$ to be used in it is  ill-defined for Stokes' first problem and related start-up problems. Consequently, the unjustified and erroneous assumption that $U(0)=0$ renders the Fourier--Laplace domain solution of Eq.~\eqref{eq:ode}, i.e., \cite[Eq.~(12)]{Fetecau}, incorrect.

To remedy the situation, let us first note that the term $U'(t)$, which comes form the mixed third-order derivative in Eq.~\eqref{eq:pde}, constitutes a  distributional derivative. Then, we have the following result:
\begin{lemm}
$U'(t) = \widetilde U'(t)H(t) + \widetilde U(0)\delta(t)$, where $\delta(t)$ is the Dirac delta distribution (generalized function).
\label{lemm:derivative}
\end{lemm}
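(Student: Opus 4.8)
The plan is to prove the identity directly from the definition of the distributional derivative, the key analytic input being an integration by parts on the half-line. Since $\widetilde U$ is continuously differentiable on $\mathbb{R}$ (as we are free to assume, per the discussion above), both $\widetilde U(t)H(t)$ and $\widetilde U'(t)H(t)$ are locally integrable and hence define regular distributions; in particular $U\equiv\widetilde U H$ is a bona fide element of $\mathcal{D}'(\mathbb{R})$ whose derivative is well defined, even though the point value $U(0)$ is not.

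First I would fix an arbitrary test function $\varphi\in C_c^\infty(\mathbb{R})$ and compute, straight from the definition,
\[
\langle U',\varphi\rangle = -\langle U,\varphi'\rangle = -\int_{-\infty}^{\infty}\widetilde U(t)H(t)\,\varphi'(t)\,dt = -\int_{0}^{\infty}\widetilde U(t)\,\varphi'(t)\,dt.
\]
Next I would integrate by parts on $[0,\infty)$, using that $\widetilde U\in C^1$ and that $\varphi$ vanishes outside a compact set, so the boundary contribution at $+\infty$ is absent:
\[
-\int_{0}^{\infty}\widetilde U(t)\,\varphi'(t)\,dt = \widetilde U(0)\varphi(0) + \int_{0}^{\infty}\widetilde U'(t)\,\varphi(t)\,dt.
\]
Recognizing the right-hand side as $\langle\widetilde U(0)\delta,\varphi\rangle + \langle\widetilde U'H,\varphi\rangle$ and noting that $\varphi$ was arbitrary yields the claimed identity. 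As a cross-check, one may instead invoke the Leibniz rule for the product of a (smooth, or with the usual care merely $C^1$) multiplier with a distribution, $(\widetilde U H)' = \widetilde U'H + \widetilde U H'$, then substitute the classical fact $H' = \delta$ together with the sampling property $\widetilde U(t)\delta(t) = \widetilde U(0)\delta(t)$, which holds because $\widetilde U$ is continuous at the origin.

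There is no deep obstacle here; the only point that deserves attention is verifying that the manipulations are legitimate under the stated $C^1$ regularity of $\widetilde U$ — specifically, that $\widetilde U H$ is a well-defined distribution and that the integration by parts (equivalently, the Leibniz rule for a merely $C^1$ multiplier) is valid. Both are standard. It is worth emphasizing that this same $C^1$ hypothesis is precisely what renders $\widetilde U'(t)H(t)$ locally integrable and the surface term $\widetilde U(0)\varphi(0)$ meaningful, in sharp contrast with the ill-defined quantity $U(0)$ whose use we criticized above.
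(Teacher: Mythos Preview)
Your argument is correct. The paper's own proof is shorter but less self-contained: it invokes the general jump formula for the distributional derivative of a piecewise-smooth function, namely $\big[\widetilde U(t)H(t)\big]' = \widetilde U'(t)H(t) + [\![U]\!]\,\delta(t)$ (citing Kanwal), and then simply evaluates the jump $[\![U]\!] = \lim_{t\to 0^+}U(t) - \lim_{t\to 0^-}U(t) = \widetilde U(0)$ using the continuity of $\widetilde U$ at the origin. Your integration-by-parts computation against a test function is precisely the standard proof of that jump formula specialized to the case at hand, so the two arguments are close cousins: yours trades an external citation for a few extra lines and makes the role of the $C^1$ hypothesis on $\widetilde U$ fully explicit, while the paper's version is terser and highlights that the only datum entering the $\delta$-term is the one-sided jump of $U$. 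Your Leibniz-rule cross-check is in fact noted in the paper as well, just after the lemma.
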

\begin{proof}
By the general result \cite[\S8.3, Eq.~(4)]{Kanwal} and the proper definition of the quantity $U(t)$ from Sec.~\ref{sec:intro}, $U'(t) \equiv \left[\widetilde U(t)H(t)\right]' = \widetilde U'(t)H(t) + [\![ U ]\!] \delta(t)$, where $[\![ U ]\!] := \lim_{t\to0^+} U(t) - \lim_{t\to0^-} U(t)$. From the fact that $\widetilde U(t)$ is continuous at $t=0$, we have $[\![ U ]\!] = \widetilde U(0)\cdot 1 - \widetilde U(0) \cdot 0 = \widetilde U(0)$.
\end{proof}

By Lemma~\ref{lemm:derivative}, upon applying the Fourier sine transform to Eq.~\eqref{eq:pde} subject to Stokes-type initial-boundary data, we obtain
\begin{equation}
\lambda \frac{\partial^2 u_\mathrm{s}}{\partial t^2} + (1 + \alpha\xi^2) \frac{\partial u_\mathrm{s}}{\partial t} + \nu\xi^2 u_\mathrm{s} \stackrel{\mathrm{dist}}{=} \xi \sqrt{\frac{2}{\pi}}\left\{\nu \widetilde U(t)H(t) + \alpha \left[\widetilde U'(t)H(t) + \widetilde U(0) \delta(t)\right] \right\},
\label{eq:new_ode}
\end{equation}
rather than what is found in Eq.~\eqref{eq:ode}. Here, we have deliberately introduced the symbol $\stackrel{\mathrm{dist}}{=} $ to stress that the equality is in the sense of distributions \cite[\S2.3]{Kanwal}. Note that $\widetilde U(0) \delta(t) \stackrel{\mathrm{dist}}{=} \widetilde U(t) \delta(t)$ also \cite[\S2.5, Example 2]{Kanwal}; so, equivalently, we may write $U'(t) = \widetilde U'(t)H(t) + \widetilde U(t)\delta(t)$, which is the result of formally differentiating $U(t)\equiv\widetilde U(t)H(t)$. Now, to apply the Laplace transform to the right-hand side of the Eq.~\eqref{eq:new_ode}, we first need to derive two identities.

\begin{lemm}
$\mathcal{L}\left\{\widetilde U'(t)H(t)\right\}(q) = q\overline{\widetilde U}(q) - \widetilde U(0).$
\label{lemm:UpH}
\end{lemm}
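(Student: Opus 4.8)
The plan is to reduce the claim to the classical Laplace-transform differentiation rule, which is legitimate here precisely because $\widetilde U$ is, by hypothesis, continuously differentiable on all of $\mathbb{R}$ (in particular across $t=0$). First I would observe that the Laplace transform integrates only over $t>0$, where $H(t)\equiv 1$; hence multiplication by the Heaviside function is invisible to the transform, and $\mathcal{L}\{\widetilde U'(t)H(t)\}(q)=\int_0^\infty e^{-qt}\widetilde U'(t)\,dt$, where on the right $\widetilde U'$ is an ordinary, locally integrable function carrying no distributional part (in contrast to $U'$ in Lemma~\ref{lemm:derivative}).

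Next I would integrate by parts: $\int_0^\infty e^{-qt}\widetilde U'(t)\,dt = \bigl[e^{-qt}\widetilde U(t)\bigr]_0^\infty + q\int_0^\infty e^{-qt}\widetilde U(t)\,dt$. For $\mathrm{Re}\,q$ large enough that $\widetilde U$ is of exponential order — automatic in the physically relevant cases, where $\widetilde U$ is bounded or grows at most polynomially — the boundary term at $+\infty$ vanishes, the one at $0$ contributes $-\widetilde U(0)$, and the remaining integral is $q\overline{\widetilde U}(q)$, giving the claimed identity. Equivalently, and more economically, one may simply invoke \cite[\S8.1, Property 3]{Kanwal}, whose sole hypothesis is continuous differentiability of the transformed function — exactly the property $\widetilde U$ enjoys — and which therefore yields $q\overline{\widetilde U}(q)-\widetilde U(0)$ at once.

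There is essentially no real obstacle here; the only point requiring care is the one that motivates the entire discussion, namely that the derivative rule is being applied to $\widetilde U$, which is smooth, and \emph{not} to $U=\widetilde U H$, which is not. Correspondingly, the quantity $\widetilde U(0)$ appearing in the statement is a genuine, well-defined point value because $\widetilde U$ is continuous at the origin, whereas $U(0)$ is ill-defined for Stokes-type data; this is precisely why the naive identity $\mathcal{L}\{U'(t)\}(q)=q\overline{U}(q)-U(0)$ fails and must be replaced, via Lemma~\ref{lemm:derivative} together with the present lemma (and the companion identity for $\mathcal{L}\{\widetilde U(t)\delta(t)\}$), by the correct transform of the right-hand side of Eq.~\eqref{eq:new_ode}.
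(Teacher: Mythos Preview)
Your proof is correct and follows essentially the same approach as the paper: first observe that $H(t)\equiv 1$ on the domain of integration so that $\mathcal{L}\{\widetilde U'(t)H(t)\}(q)=\mathcal{L}\{\widetilde U'(t)\}(q)$, and then apply the classical differentiation rule for Laplace transforms (\cite[\S8.1, Property~3]{Kanwal}), which is legitimate precisely because $\widetilde U$ is continuously differentiable. The only difference is that you additionally spell out the integration-by-parts computation underlying that property, which the paper simply cites.
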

\begin{proof}
Following \cite[\S8.2]{Kanwal},
$\int_0^\infty \widetilde U'(t) H(t) \mathrm{e}^{-qt} \,\mathrm{d}t = \int_0^\infty \widetilde U'(t) \mathrm{e}^{-qt} \,\mathrm{d}t  \equiv \mathcal{L}\left\{\widetilde U'(t)\right\}(q).$
Then, from \cite[\S8.1, Property 3]{Kanwal} and the fact that $\widetilde U(t)$ is a continuously differentiable function for all $t\in\mathbb{R}$ (e.g., a polynomial or trigonometric function in Stokes-type problems), we have the desired result.
\end{proof}

\begin{lemm}
$\mathcal{L}\left\{U'(t)\right\}(q) \equiv \mathcal{L}\left\{\widetilde U'(t)H(t) + \widetilde U(0)\delta(t)\right\}(q) = q\overline{\widetilde U}(q).$
\end{lemm}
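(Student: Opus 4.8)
The plan is to combine the two preceding lemmas with the linearity of the Laplace transform and the sifting property of the Dirac delta. Specifically, I would write $\mathcal{L}\{U'(t)\}(q) = \mathcal{L}\{\widetilde U'(t)H(t)\}(q) + \widetilde U(0)\,\mathcal{L}\{\delta(t)\}(q)$, which is justified because $\mathcal{L}$ is linear on distributions and $\widetilde U(0)$ is a well-defined constant (as emphasized in Sec.~\ref{sec:intro}, $\widetilde U$ is continuously differentiable on all of $\mathbb{R}$).

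Next I would invoke Lemma~\ref{lemm:UpH} for the first term, giving $\mathcal{L}\{\widetilde U'(t)H(t)\}(q) = q\overline{\widetilde U}(q) - \widetilde U(0)$. For the second term I would recall the elementary distributional Laplace transform $\mathcal{L}\{\delta(t)\}(q) = 1$, which follows from the sifting property $\int_0^\infty \delta(t)\mathrm{e}^{-qt}\,\mathrm{d}t = \mathrm{e}^{-q\cdot 0} = 1$ (this is the standard convention for the one-sided transform of $\delta$ supported at the origin; cf.~\cite[Chap.~2]{Kanwal}). Substituting both pieces yields $\mathcal{L}\{U'(t)\}(q) = \big(q\overline{\widetilde U}(q) - \widetilde U(0)\big) + \widetilde U(0)\cdot 1 = q\overline{\widetilde U}(q)$, which is the claimed identity.

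The only genuine subtlety — what I would flag as the ``main obstacle'' rather than a routine step — is the treatment of $\mathcal{L}\{\delta(t)\}$ at the endpoint $t=0$. Because the lower limit of the Laplace integral coincides with the support of $\delta$, one must fix a convention (the ``$0^-$'' versus ``$0^+$'' issue); the value $1$ is the correct one for the distributional setup used here and is consistent with the decomposition of Lemma~\ref{lemm:derivative}, so no half-delta ambiguity arises. I would note in passing that this is precisely the bookkeeping the erroneous derivation in \cite{Fetecau} omits: the ``missing'' term $\widetilde U(0)$ that their identity $\mathcal{L}\{U'(t)\}(q) = q\overline{U}(q) - U(0)$ would subtract is exactly cancelled by the delta contribution, so the net effect is $q\overline{\widetilde U}(q)$ with no boundary term at all — which is why, as the abstract states, their accidental assumption $U(0)=0$ happens to reproduce the correct right-hand side.
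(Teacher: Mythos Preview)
Your proof is correct and follows essentially the same route as the paper's: split by linearity, apply Lemma~\ref{lemm:UpH} to the $\widetilde U'(t)H(t)$ term, use $\mathcal{L}\{\delta(t)\}(q)=1$ for the delta term, and observe the cancellation of $\widetilde U(0)$. Your additional remarks on the $0^-$/$0^+$ convention and on why the erroneous assumption $U(0)=0$ in \cite{Fetecau} accidentally yields the right answer are accurate commentary, but the core argument matches the paper exactly.
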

\begin{proof}
From \cite[\S8.1, Property 1]{Kanwal}, we have $\mathcal{L}\left\{\widetilde U'(t)H(t) + \widetilde U(0)\delta(t)\right\}(q) = \mathcal{L}\left\{\widetilde U'(t)H(t)\right\}(q) + \widetilde U(0)\mathcal{L}\left\{\delta(t)\right\}(q)$, which is just $q\overline{\widetilde U}(q) - \widetilde U(0) + \widetilde U(0) \cdot 1 = q\overline{\widetilde U}(q)$ thanks to Lemma~\ref{lemm:UpH} and \cite[\S8.2, Example 2(d)]{Kanwal}.
\end{proof}

Note that this result holds for any smooth $\widetilde U(t)$ and does \emph{not} require the unjustifiable assumption that $U(0) = 0$ employed in \cite{Fetecau} to ``make things work out.''  In fact, we have shown the much stronger result that the value of $\widetilde U(t)$ at $t=0$ is irrelevant as long as $\widetilde U(0)$ exists and is well-defined.

Finally, upon acknowledging the initial conditions, we arrive at the correct solution of Eq.~\eqref{eq:pde} in the Fourier--Laplace transform domain (with $\overline{u_\mathrm{s}}(\xi,q)$ denoting the image of $u(y,t)$ there):
\begin{equation}
\overline{u_\mathrm{s}}(\xi,q) = \xi\sqrt{\frac{2}{\pi}}\frac{\nu + \alpha q}{\lambda q^2 + (1 + \alpha \xi^2) q + \nu\xi^2}\overline{\widetilde U}(q).
\label{eq:fltd}
\end{equation}
Though Eq.~\eqref{eq:fltd} is deceivingly similar to \cite[Eq.~(12)]{Fetecau}, the former is the true Fourier--Laplace transform domain solution, while the latter is incorrect in general. Given the contradictory assumptions and lack of proof, it appears that \cite[Eq.~(12)]{Fetecau} is similar to Eq.~\eqref{eq:fltd} only because the special case given in \cite[Eq.~(3)]{ChristovJordan} was ``reverse engineered.''

\section{Further deficiencies}
In \cite[Sec.~2]{Fetecau}, it is stated that ``in order to solve a well-posed problem for such fluids one has to require an additional initial condition apart from the requirement that the fluid is initially at rest'' \cite[p.~2]{Fetecau}. This statement concerns the initial condition $\frac{\partial u}{\partial t}(y,0) = 0$ in Eq.~\eqref{eq:ibc}. However, for Stokes' first problem of the impulsively moved infinite plate, the fluid being at rest initially, i.e., $u(y,t\le 0^-) \equiv 0$ (identically), trivially implies it has zero initial velocity $u(y,0)=0$, zero initial acceleration $\frac{\partial u}{\partial t}(y,0) = 0$, zero initial jerk $\frac{\partial^2 u}{\partial t^2}(y,0) = 0$, zero initial jounce $\frac{\partial^3 u}{\partial t^3}(y,0) = 0$, and so on. Of course, these time derivatives are one-sided (with $y>0$) because the solution itself possesses no time derivative across the plane $t=0$ as shown by the discussion above (see also \cite[p.~577]{Tanner} and \cite[p.~719]{Christov}). None of these initial conditions are ``additional assumptions'' beyond the assumption of the fluid being initially at rest, which was made in the problem's formulation. Just because Eq.~\eqref{eq:pde} is second-order in time and requires the use of two initial conditions stemming from the assumption that the fluid is initially at rest (rather than the single initial condition needed for, e.g., the second grade or viscous Newtonian fluids) does \emph{not} mean that ``additional assumptions'' were made.

Furthermore, an unreduced form of the $yt$-domain solution is presented in \cite[Eq.~(17)]{Fetecau}, which appears to be able to give a complex-valued velocity if $(1+\alpha\xi^2)^2 < 4\nu\lambda\xi^2$ in the expressions for $r_{1,2}$. The concerned reader is referred to \cite[Eq.~(7)]{ChristovJordan}, where it is shown that there are three cases to be considered and the solution is real-valued for all three.

In \cite[Sec.~4.1]{Fetecau}, it is claimed by the authors that they ``consider the case when the dimensionless relaxation and retardation times, $\lambda/t$ and $\lambda_r/t = \alpha / (\nu t)$ are much less than one'' \cite[p.~6]{Fetecau}. However, this is a false statement. What is provided are results \emph{for times that are long compared to both the relaxation and retardation time scales set by $\lambda$ and $\lambda_r$, respectively.} The dimensionless relaxation and retardation times, depending on the choice on non-dimensionalization scheme, could be given by, e.g., $\lambda^* = \lambda U_0^2/\nu$ and $\lambda_r^* = \lambda_r U_0^2/\nu$, where $U_0$ is a characteristic velocity. These are never introduced in \cite[Sec.~4.1]{Fetecau}.

In \cite[Sec.~4.2]{Fetecau}, a scaling argument is given about the case of ``$\lambda/t,\; \lambda_r/t \gg 1$,'' leading to the claim that, in this limit, Eq.~\eqref{eq:pde} reduces to the analogous equation for a viscous Newtonian fluid if additionally $\lambda = \lambda_r$. However, it is clear from \cite[Eq.~(7)]{ChristovJordan} that, when $\lambda = \lambda_r$, the solution to Stokes' first problem for the Oldroyd-B fluid is \emph{identical} to the solution of Stokes' first problem for the viscous Newtonian fluid, independently of the sizes of $\lambda/t$ and $\lambda_r/t$. The result in \cite{ChristovJordan} is mathematically stronger  because it is a statement about the solution for $\lambda=\lambda_r$, regardless of any asymptotic approximation to Eq.~\eqref{eq:pde}. The result is obvious once the solution to Eq.~\eqref{eq:pde} is properly reduced, unlike the expression presented in \cite[Eq.~(17)]{Fetecau}.

In addition, it is claimed that the correct solution for a second grade fluid executing the same motion (formally, the $\lambda\to0$ limit of the Oldroyd-B solution) ``can be also obtained from [18, Eq.~(2.5)] (integrating by parts the last term and taking $V(t) = UH(t)$)'' \cite[p.~4]{Fetecau}. Of course this is untrue as \cite[Ref.~18]{Fetecau} is known to contain erroneous derivations \emph{and} solutions \cite{ChristovChristov}. Therefore, claiming \emph{ex post facto} that a correct solution could be extracted from a wrong solution (once it is known what the correct solution is supposed to be) is fallacious reasoning. At any rate, the derivation of \cite[Ref.~18, Eq.~(2.5)]{Fetecau} suffers from the same logical inconsistencies explained in detail above. The reader is referred to \cite{Christov,ChristovChristov} for more exposition on the issue.

Finally, the list of references provided in \cite{Fetecau} fails to properly attribute previous results to the appropriate works in the literature. For the limiting case of a Maxwell fluid executing the same motion (formally, the $\lambda_r\to0$ limit in the Oldroyd-B solution), the energy analysis of Stokes' first problem is attributed to a 2007 paper \cite[Ref.~6]{Fetecau} co-authored by two of the authors of \cite{Fetecau}. The latter paper \cite[Ref.~6]{Fetecau}, the one under current discussion \cite{Fetecau} and a third paper by two of the authors of \cite{Fetecau} in {\it Int.\ J.\ Non-linear Mech.}\ (vol.\ 44, pp.\ 862--864) in 2009, all neglect to mention that the energy analysis of Stokes' first problem for Maxwell fluids had already been completed in full in 2005 \cite{JordanPuri}.

\section{Closure}
The failure to pose the initial-boundary conditions in Eq.~\eqref{eq:ibc} in a mathematically-precise form, combined with the inherently contradictory assumptions on the form and smoothness of the poorly-defined function $U(t)$ from Eq.~\eqref{eq:ibc}, has lead to a large number of wrong solutions being published (see, e.g., \cite{ChristovChristov,ChristovJordan,Jordan} for some proper corrections and also \cite{Christov} for a longer listing of erroneous papers). Even a recent Erratum \cite{FetecauErratum} claims that ``for $V(t) = V\cos(\omega t)$ ... the solutions corresponding to the motion induced by an oscillating ... plate are recovered'' \cite[p.~360]{FetecauErratum} ($V$ therein being $U$ from the present Comment). Of course, $\lim_{t\to0^+} \cos(\omega t) = 1 \ne 0$, so the solution corresponding to the motion induced by a suddenly oscillated plate is \emph{not} recovered. This is because of the lack of $H(t)$ multiplying this expression, meaning its derivate calculated as in \cite[Eq.~(5)]{FetecauErratum} is generally incorrect (just as Eq.~\eqref{eq:ode} above is generally incorrect).

Finally, it is also important to note that Eq.~\eqref{eq:pde} is \emph{linear} and has been solved (correctly) subject to the initial-boundary conditions in Eq.~\eqref{eq:ibc} as early as 1956 \cite{Morrison} (see also \cite[Sect.~2.2]{Christov}). Therefore, it is unclear how the recent paper \cite{Fetecau} under discussion here is ``demonstrating the relevance and applicability of \emph{nonlinear} techniques'' (emphasis added) \cite{NONRWA}.

\section*{Acknowledgments}
The author would like to express his gratitude to Prof.\ C.\ I.\ Christov and Dr.\ P.\ M.\ Jordan for enlightening discussions and encouragement.


\appendix
\section*{Appendix. Other erroneous recent papers on simple flows of non-Newtonian fluids in Nonlinear Anal.\ RWA}
Papers published in this journal that make the mistake of incorrectly differentiating the suddenly-moved plate's velocity (and therefore provide incorrect, unphysical solutions as discussed in \cite{Christov,ChristovChristov,ChristovJordan,Jordan}) include, but are not limited to, the following:
\begin{enumerate}
\item F.\ Shen, W.\ Tan, Y.\ Zhao, T.\ Masuoka, The Rayleigh--Stokes problem for a heated generalized second grade fluid with fractional derivative model, {Nonlinear Anal.\ RWA} 7 (2006) 1072--1080.
\item C.\ Fetecau, T.\ Hayat, C.\ Fetecau, N. Ali, Unsteady flow of a second grade fluid between two side walls perpendicular to a plate, {Nonlinear Anal.\ RWA} 9 (2008) 1236--1252.
\item T.\ Hayat, C.\ Fetecau, M.\ Sajid, Analytic solution for MHD Transient rotating flow of a second grade fluid in a porous space, {Nonlinear Anal.\ RWA} 9 (2008) 1619--1627.
\item M.\ Khan, S.\ Wang, Flow of a generalized second-grade fluid between two side walls perpendicular to a plate with a fractional derivative model, {Nonlinear Anal.\ RWA} 10 (2008) 203--208.
\item C.\ Xue, J.\ Nie, Exact solution of Stokes' first problem for heated generalized Burgers' fluid in a porous half-space, {Nonlinear Anal.\ RWA} 9 (2008) 1628--1637.
\item M.\ Hussain, T.\ Hayat, S.\ Asghar, C.\ Fetecau, Oscillatory flows of second grade fluid in a porous space, {Nonlinear Anal.\ RWA} 10 (2008) 2403--2414.
\item M.\ Khan, The Rayleigh--Stokes problem for an edge in a viscoelastic fluid with a fractional derivative model, {Nonlinear Anal.\ RWA} 10 (2008) 3190--3195.
\item Y.\ Yao, Y.\ Liu, Some unsteady flows of a second grade fluid over a plane wall, Nonlinear Anal.\ RWA 11 (2010) 4442--4450.
\end{enumerate}
The reader may verify that the above papers are erroneous by specializing the Fourier--Laplace domain representation of the governing equation given therein to the cases for which the correct form is known from \cite{ChristovChristov,ChristovJordan,Jordan}, then showing that the two disagree. Note that, at this time, only the paper in Item 5 above has been formally corrected thanks to the Comment paper of Jordan \cite{Jordan}.

What is more, the recent paper [L.\ Zheng, Y.\ Liu, X.\ Zhang, Slip effects on MHD flow of a generalized Oldroyd-B fluid with fractional derivative, doi:10.1016/j.nonrwa.2011.02.016] purports to study once again a known linear problem. However, the article begins with the incorrect claim that ``$\nabla\bm{V}=0$, $\rho\mathrm{d}\bm{V}/\mathrm{d}t=\nabla\mathbf{T}+\rho\bm{b}$'' (instead of $\nabla\cdot\bm{V}=0$, $\rho\mathrm{d}\bm{V}/\mathrm{d}t=\nabla\cdot\mathbf{T}+\rho\bm{b}$) are the ``constitutive equations'' (instead of the equations of conservation of mass and momentum) of an incompressible fluid. Skipping over the technical problems in between, consider the final conclusion of the paper: ``the results indicated that the strongest shear stress occurs near the plate and the shear stress decreases rapidly with the increase of distance from the plate.'' Of course, this is nothing more than the imposed boundary conditions of a suddenly-moved plate and the decay of the velocity at infinity. It is unclear why the satisfaction of the imposed conditions can be the major conclusion of a scientific paper.

So it goes.


\begin{thebibliography}{99}
\bibitem{Fetecau}
C.\ Fetecau, T.\ Hayat, J.\ Zierep, M.\ Sajid, Energetic balance for the Rayleigh--Stokes problem of an Oldroyd-B fluid, Nonlinear Anal.\ RWA 12 (2011) 1--13.

\bibitem{Christov}
I.C.\ Christov, Stokes' first problem for some non-Newtonian fluids: Results and mistakes, Mech.\ Res.\ Commun.\ 37 (2010) 717--723.

\bibitem{ChristovChristov}
I.C.\ Christov, C.I.\ Christov, Comment on ``On a class of exact solutions of the equations of motion of a second grade fluid'' by C.\ Fetec\u{a}u and J.\ Zierep (Acta Mech.\ 150, 135--138, 2001), Acta Mech.\ 215 (2010) 25--28.

\bibitem{ChristovJordan}
C.I.\ Christov, P.M.\ Jordan, Comment on ``Stokes' first problem for an Oldroyd-B fluid in a porous half space'' [Phys.\ Fluids 17, 023101 (2005)], Phys.\ Fluids 21 (2009) 069101.

\bibitem{Tanner}
R.I.\ Tanner, Note on the Rayleigh problem for a visco-elastic fluid, Z.\ Angew.\ Math.\ Phys.\ (ZAMP) 13 (1962) 573--580.

\bibitem{AB08}
V.\ Acary, B.\ Brogliato, Numerical methods for nonsmooth dynamical systems: Applications in mechanics and electronics, vol.~35 of Lecture Notes in Applied and Computational Mechanics, Springer-Verlag, Berlin/Heidelberg, 2008.

\bibitem{D74}
G.\ Doestsch, Introduction to the Theory and Application of the Laplace Transformation, translated by W.\ Nader, Springer-Verlag, Berlin/Heidelberg, 1974.

\bibitem{Kanwal}
R.P.\ Kanwal, Generalized functions: Theory and technique, Academic Press, New York, 1983.

\bibitem{JordanPuri}
P.M.\ Jordan, P.\ Puri, Revisiting Stokes' first problem for Maxwell fluids, Q.\ J.\ Mech.\ Appl.\ Math.\ 58 (2005) 213--227.

\bibitem{Jordan}
P.M.\ Jordan, Comments on: ``Exact solution of Stokes' first problem for heated generalized BurgersÕ fluid in a porous half-space'' [Nonlinear Anal. RWA 9 (2008) 1628], Nonlinear Anal.\ RWA 11 (2010) 1198--1200.

\bibitem{FetecauErratum}
C.\ Fetecau, T.\ Hayat, M.\ Khan, C.\ Fetecau, Erratum to: Unsteady flow of an Oldroyd-B fluid induced by the impulsive motion of a plate between two side walls perpendicular to the plate, Acta Mech.\ 216 (2011) 359--361.

\bibitem{Morrison}
J.A.\ Morrison, Wave propagation in rods of Voigt material and visco-elastic materials with three-parameter models, Quart.\ Appl.\ Math.\ 14 (1956) 153--169.

\bibitem{NONRWA}
{\it Nonlinear Analysis: Real World Applications}, Aims \& Scope, \url{http://www.elsevier.com/locate/nonrwa} (accessed 31 May 2011).

\end{thebibliography}
\end{document}